\newtheorem*{theorem*}{Theorem}
\newtheorem{theorem}{Theorem} 
\newtheorem{corollary}{Corollary}
\newtheorem{lemma}{Lemma} 
\newtheorem{claim}{Claim}
\newcommand{\C}{{\mathcal C}}
\newcommand{\E}{{\mathcal E}}
\newcommand{\cO}{{\mathcal O}}
\newcommand{\R}{{\mathcal R}}
\renewcommand{\S}{{\mathcal S}}
\renewcommand{\O}{{\mathcal O}}
\newcommand{\ba}{{\boldsymbol a}}
\newcommand{\bw}{{\boldsymbol{w}}}
\begin{document}
\title{Mass Error-Correction Codes for Polymer-Based Data Storage}

\author{%
  \IEEEauthorblockN{Ryan Gabrys}
  \IEEEauthorblockA{University of California, San Diego\\
                    SPAWAR, San Diego\\
                    San Diego, CA, 92093, USA\\
                    ryan.gabrys@gmail.com}
  \and
  \IEEEauthorblockN{Srilakshmi Pattabiraman and Olgica Milenkovic}
  \IEEEauthorblockA{The Department
of Electrical and Computer Engineering \& CSL\\
                    University of Illinois, Urbana-Champaign\\ 
                    Urbana, IL, 61801\\
                    Email: \{sp16, milenkov \}@illinois.edu}
}

\maketitle

\begin{abstract}
We consider the problem of correcting mass readout errors in information encoded in binary polymer strings. 
Our work builds on results for string reconstruction problems using composition multisets~\cite{acharya2014string} and the unique string 
reconstruction framework proposed in~\cite{pattabiraman2019reconstruction}. Binary polymer-based data storage systems~\cite{Laure2016Coding} operate by designing two molecules of significantly different masses to 
represent the symbols $\{{0,1\}}$ and perform readouts through noisy tandem mass spectrometry. Tandem mass spectrometers fragment the
strings to be read into shorter substrings and only report their masses, often with errors due to imprecise ionization. Modeling the fragmentation 
process output in terms of composition multisets allows for designing asymptotically optimal codes capable of unique reconstruction 
and the correction of a single mass error~\cite{pattabiraman2019reconstruction} through the use of derivatives of Catalan paths. 
Nevertheless, no solutions for multiple-mass error-corrections are currently known. Our work addresses this issue by describing the 
first multiple-error correction codes that use the polynomial factorization approach for the Turnpike problem~\cite{skiena1990reconstructing} and the related factorization described in~\cite{acharya2014string}. Adding Reed-Solomon type coding redundancy into the corresponding polynomials allows for correcting
$t$ mass errors in polynomial time using $t^2\, \log\,k$ redundant bits, where $k$ is the information string length. 
The redundancy can be improved to $\log\,k + t$. However, no decoding algorithm that runs polynomial-time in both $t$ and $n$ for this scheme are currently known, where $n$ is the length of the coded string. 
\end{abstract}

\section{Introduction}

To address the issue of massive data storage, several molecular storage paradigms have recently been put forward in~\cite{al2017mass,goldman2013towards,grass2015robust,yazdi2015rewritable,yazdi2017portable,tabatabaei2019dna}. Among these methods, synthetic polymer-based storage offers the highest promise of low cost and low readout latency~\cite{al2017mass}. In synthetic polymer storage systems,  the two bits $0$ and $1$ are represented by polymers of different masses that are linked through automated phosphoamidite chemistry in a user-specified manner. The stored data is read using tandem mass (MS/MS) spectrometers which provides estimates of the masses of the fragmented polymer. 

Most MS/MS readout systems produce masses of prefixes and suffixes of the data string, which if recovered reliably allow for straightforward string reconstruction. 
Unfortunately, the MS/MS readout process suffers from large mass read error-rates that arise due to imprecise fragmentation. Similar mass error as well as unique reconstruction issues arise in systems that provide the masses of all substrings of the recorded string. 

To address the latter issue, the authors of~\cite{acharya2014string} introduced the problem of \emph{binary string reconstruction from its substring composition multiset}. The substring composition multiset of a binary string is obtained by writing out all substrings of the string of all possible lengths and then representing each substring by its composition. As an example, the string $100$ contains three substrings of length one - $1$, $0$, and $0$, two substrings of length 2 - $10$ and $00$, and one substring of length three - $100$. The composition multiset of the substrings of length one, two and three equals $\{ 0,0,1 \}$, $\{0^11^1,0^2\}$ and $\{0^21^1\}$, respectively. Note that composition multisets ignore information about the actual order of the bits and the substrings and may hence be seen as only capturing the information about the ``mass'' or ``weight'' of unordered substrings. Furthermore, the multiset information cannot distinguish between a string and it's reversal, as well as some other nontrivial interleaved string structures. The problem addressed in~\cite{acharya2014string} was to determine for which string lengths can one guarantee unique reconstruction from an error-free composition multiset up to string reversal. The main results of~\cite[Theorem ~17, ~18, ~20]{acharya2014string} assert that binary strings of length  $\leq 7$, one less than a prime, or one less than twice a prime are uniquely reconstructable up to reversal.

Unlike the work in~\cite{acharya2014string}, the follow-up work of~\cite{pattabiraman2019reconstruction} focused on the problem of constructing uniquely reconstructible strings and uniquely reconstructable strings capable of correcting a single mass error. Both lines of work used the simplifying assumptions that 
one can infer the composition of a fragment polymer from its mass and that when a polymer block is broken down for mass spectrometry analysis, we observe the masses of all its substrings with identical frequency. 

We extend the above described coded string reconstruction study by proposing the first known coding scheme capable of correcting arbitrary multiple mass errors in the polymer strings. Unlike the single-error correction setting which interleaves Catalan-Bertrand paths to obtain codewords with the desired properties we use the two-variate polynomial characterization of the strings first described in~\cite{acharya2014string}. By forcing the polynomials to have specific evaluations at a selected set of elements of an appropriate finite field, we arrive at a Reed-Solomon like characterization of the codestrings. This construction has redundancy $t^2 \log\, k$ bits and also allows for simple polynomial time decoding based on existing Reed-Solomon decoders. We also briefly describe how to extend the 
Catalan-Bertrand framework~\cite{pattabiraman2019reconstruction} for the case of multiple mass errors. For this formulation, the redundancy equals 
$\log\, k + t$ bits while the worst case decoding complexity is exponential in $t$. It remains an open problem to find efficient decoders for this class of codes. Both results add to the growing list of uncoded and coded string reconstruction problems~\cite{levenshtein2001efficient,dudik2003reconstruction,batu2004reconstructing, viswanathan2008improved,kiah2016codes,gabrys2018unique,cheraghchi2019coded}.

\section{Problem Formulation}

Let $\textbf{s}=s_1 s_2 \ldots s_k$ be a binary string of length $k \geq 2$. A substring of $\textbf{s}$ starting at $i$ and ending at $j$, where $1 \leq i < j \leq k,$ is denoted by $\textbf{s}_{i}^{j}$, and is said to have \emph{composition} $0^{z}1^{w}$, where $0 \leq z,w \leq j-i+1$ stand for the number of $0$s and $1$s in the substring, respectively. Note that the composition only conveys information about the weight of the substring, but not the particular order of the bits. Furthermore, let $C_{l}(\textbf{s})$  stand for the multiset of compositions of substrings of $\textbf{s}$ of length $l$, $1\leq l \leq k$. This multiset contains $k-l+1$ compositions. 
The multiset $C(\textbf{s})=\cup_{l=1}^{k} C_{l}(\textbf{s})$ is termed the \emph{composition multiset}. It is straightforward to see that the composition multisets 
of a string $\textbf{s}$ and its reversal, $\textbf{s}^r=s_k s_{k-1} \ldots s_1$ are identical and hence these two strings are indistinguishable based on $C(\cdot)$. 
If a collection of codestrings has the property that all pairs of strings are distinguishable based on their multiset composition, the underlying codebook is referred to as a \emph{reconstruction code}~\cite{pattabiraman2019reconstruction}.

We also define the \emph{cummulative weight} of a composition multiset $C_{l}(\textbf{s}),$ with compositions of the form $0^{z}1^{w}$, where $z+w=l$, as $w_{l}(\textbf{s})=\sum_{0^{z}1^{w} \in C_{l}(\textbf{s})}\, w.$ Observe that $w_{1}(\textbf{s})=w_{k}(\textbf{s})$, as both equal the weight of the string $\textbf{s}$. More generally, one has $w_{l}(\textbf{s})=w_{k-l+1}(\textbf{s}), \text{ for all } 1 \leq l \leq k.$

In our derivations we also make use of the following notation. For a string $\textbf{s}=s_1 s_2 \ldots s_k$, we let $\sigma_i=\text{wt}(s_is_{k-i+1})$ for $i \leq \lfloor \frac{k}{2} \rfloor,$ and $\sigma_{ \lceil \frac{k}{2} \rceil}=\text{wt}(s_{ \lceil \frac{k}{2} \rceil})$, where $\text{wt}$ stands for the weight of the string. 
We also use $\Sigma^{ \lceil \frac{k}{2} \rceil}$ to denote the sequence $( \sigma_i)_{i \in [ \lceil \frac{k}{2} \rceil]},$ where $[a]=\{{1,\ldots,a\}}$. Whenever clear from the context, we omit the argument $\textbf{s}$ and the floors/ceiling functions required to obtain appropriate integer lengths.

We now describe our problem setup. One is given a valid composition multiset of a string $\textbf{s}$, $C(\textbf{s})$. Within the multiset $C(\textbf{s})$, some compositions may be arbitrarily corrupted. We refer to such errors as \textbf{composition errors}. For example, when $\textbf{s}=100101$, the multiset 
$C_2(\textbf{s})=\{{0^11^1,0^2,0^11^1,0^11^1,0^11^1\}}$ may be corrupted to $\hat{C}_2(\textbf{s})=\{{\mathbf{0^2},0^2,0^11^1,0^11^1,0^11^1\}}$, in which case we have a single composition error. Furthermore, the multisets $C_2(\textbf{s})$ and $C_5(\textbf{s})$ may be corrupted to $\hat{C}_2(\textbf{s})=\{{\mathbf{0^2},0^2,0^11^1,0^11^1,0^11^1\}}$ and $\hat{C}_5(\textbf{s})=\{{\mathbf{0^11^4},0^31^2\}}$, in which case we say that we encountered an example of two composition errors. 

The problem at hand is to design the largest reconstructable codebook of strings with $k$ information bits and of length $n$ such that any $t < n$ composition errors can be correctly identified and corrected. 

\section{Main Results: Error-Correcting Reconstruction Codes}\label{sec:symmetric}

We now turn our attention to reconstruction codes capable of correcting multiple composition errors. The proposed method leverages a polynomial formulation of the composition reconstruction problem first described in~\cite{acharya2014string}. The main result is a constructive proof for the existence of codes with $\O(t^2 \log n)$ bits of redundancy capable of correcting $t$ composition errors.

To this end, we first review the results of~\cite{acharya2014string} that describe the string reconstruction problem using bivariate polynomial factorization. For a string $\textbf{s} \in \{0,1\}^n$, let $P_{\textbf{s}}(x,y)$ be a bivariate polynomial of degree $n$ with coefficients in $\{0,1\}$ such that $P_{\textbf{s}}(x,y)$ contains exactly one term with total degree $i \in \{0,1,\ldots, n\}$. If $\textbf{s}= s_1 \ldots s_n$ and if $\Big (P_{\textbf{s}}(x,y) \Big)_i$ denotes the unique term of total degree $i$, then $\Big (P_{\textbf{s}}(x,y) \Big)_0 = 1$, and 
\begin{align*}
\Big (P_{\textbf{s}}(x,y) \Big)_i = \begin{cases}
y \, \Big(P_{\textbf{s}}(x,y) \Big)_{i-1}, \text{ if $s_i = 0$}, \\
x \, \Big(P_{\textbf{s}}(x,y) \Big)_{i-1}, \text{ if $s_i = 1$.}
\end{cases}
\end{align*}

In words, we use $y$ to denote the bit $0$ and $x$ to denote the bit $1$ and then summarize the composition of all prefixes of the string $\textbf{s}$ in polynomial form. As a simple example, for $\textbf{s} = 0100$ we have $P_{\textbf{s}}(x,y) = 1 + y + xy + xy^2 + xy^3$: We start with the free coefficient $1$, then add $y$ to indicate that the prefix of length one of the string equals $0$, add $xy$ to indicate that the prefix of length two contains one $0$ and one $1$, add $xy^2$ to indicate that the prefix of length three contains two $0$s and one $1$ and so on. 

We also introduce another bivariate polynomial $S_{\textbf{s}}(x,y)$ to describe the composition multiset $C(\textbf{s})$ in a manner similar to $P_{\textbf{s}}(x,y)$. In particular, we now associate each composition with a monomial in which the symbol $y$ represents the bit $0$ and the symbol $x$ with the bit $1$. As an example, for $\textbf{s} = 0100$ we have 
$C(\textbf{s})=\Big \{ 0 , 1, 0, 0, 01, 01, 0^2, 0^21, 0^21, 0^31 \Big \},$
and 
$S_{\textbf{s}}(x,y) = x + 3y + 2xy + y^2 + 2xy^2 + xy^3,$
where the first two terms in $S_{\textbf{s}}(x,y)$ indicate that the composition multiset contains one substring $1$ and three substrings $0$; the next three terms indicate that the string contains two substrings with one $1$ and one $0$ and one substring with two $0$s. The remaining terms are interpreted similarly. 

The key identity observation from~\cite{acharya2014string} is as follows:
\begin{align}\label{eq:SP}
P_{\textbf{s}} (x,y) \, P_{\textbf{s}}\left(\frac{1}{x}, \frac{1}{y}\right) =  (n+1) + S_{\textbf{s}}(x,y) + S_{\textbf{s}}\left(\frac{1}{x}, \frac{1}{y}\right).
\end{align}
Given a bivariate polynomial $f(x,y)$, we use $f^{*}(x,y)$ to denote its reciprocal polynomial, defined as
$$ f^{*}(x,y) = x^{\text{deg}_x(f)} y^{\text{deg}_y(f)} f\left(\frac{1}{x}, \frac{1}{y}\right),$$
where $\text{deg}_x(f)$ denotes the $x$-degree of $f(x,y)$ and $\text{deg}_y(f)$ denotes its $y$-degree. For simplicity, we hence write $d_x = \text{deg}_x(P_{\textbf{s}})$ and $d_y = \text{deg}_y(P_{\textbf{s}})$. Using the notion of the reciprocal polynomial we can rewrite the expression in (\ref{eq:SP}) as:
\begin{align}
P_{\textbf{s}} (x,y) \, P^{*}_{\textbf{s}}(x, y) = x^{d_x} y^{d_y} \, \left( n + 1 + S_{\textbf{s}}(x,y) \right) + S^{*}_{\textbf{s}}(x,y).
\end{align}

Note that if $C'(\textbf{s})$ is the composition multiset resulting from $t$ composition errors in $C(\textbf{s})$ and $\tilde{S}_{\textbf{s}}(x,y)$ is the polynomial representation for $C'(\textbf{s})$ while $S_{\textbf{s}}(x,y)$ is the polynomial representation for $C(\textbf{s})$, then we have:
\begin{align*}
\tilde{S}_{\textbf{s}}(x,y) = S_{\textbf{s}}(x,y) + E(x,y),
\end{align*}
where $E(x,y)$ has at most $2t$ terms. Our first result relates $\tilde{S}_{\textbf{s}}(x,y)$ and $P_{\textbf{s}}(x,y)$. 

\begin{claim}\label{cl:equiv} Suppose that $\emph{wt}(\textbf{s}) \bmod 2t+1 \equiv c_w$ for some $c_w \in \{0,1,\ldots, 2t\}$. Then, given $\tilde{S}_{\textbf{s}}(x,y)$ and $c_w$ one can generate
\begin{align*}
P_{\textbf{s}}(x,y) \, P^{*}_{\textbf{s}}(x,y)  + \tilde{E}(x,y),
\end{align*}
where the polynomial $\tilde{E}(x,y)$ has at most $4t$ terms.
\end{claim}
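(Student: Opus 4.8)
The plan is to proceed in two stages. First, recover $\text{wt}(\textbf{s})$ — equivalently the $x$- and $y$-degrees of $P_{\textbf{s}}$ — exactly from the corrupted data; then substitute into~(\ref{eq:SP}) so that the error polynomial $E(x,y)$ contributes only a bounded number of spurious terms.

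\emph{Recovering the weight.} The key observation is that in the error-free polynomial $S_{\textbf{s}}(x,y)$ the coefficient of the monomial $x$ (i.e.\ $x^1y^0$) equals the number of length-one substrings of $\textbf{s}$ equal to $1$, namely $\text{wt}(\textbf{s})$, since no substring of length $\geq 2$ contributes a monomial of total degree $1$. Writing $\tilde{S}_{\textbf{s}} = S_{\textbf{s}} + E$ and noting that each of the at most $t$ composition errors merely replaces one composition of the multiset by a different one, each error changes the coefficient of $x$ by at most $1$ in absolute value; hence the coefficient $\hat{w}$ of $x$ in $\tilde{S}_{\textbf{s}}$ satisfies $|\hat{w} - \text{wt}(\textbf{s})| \le t$. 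The $2t+1$ consecutive integers $\hat{w}-t, \hat{w}-t+1, \ldots, \hat{w}+t$ contain exactly one integer congruent to $c_w$ modulo $2t+1$; since $\text{wt}(\textbf{s})$ lies among them and is congruent to $c_w$, it must be that integer, so $\text{wt}(\textbf{s})$ is determined. As $n$ is a fixed code parameter, this gives $d_x = \deg_x P_{\textbf{s}} = \text{wt}(\textbf{s})$ and $d_y = \deg_y P_{\textbf{s}} = n - \text{wt}(\textbf{s})$.

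\emph{Applying the identity.} Form $T(x,y) := (n+1) + \tilde{S}_{\textbf{s}}(x,y) + \tilde{S}_{\textbf{s}}(\tfrac{1}{x},\tfrac{1}{y})$ and multiply it by $x^{d_x}y^{d_y}$ using the degrees found above. Substituting $\tilde{S}_{\textbf{s}} = S_{\textbf{s}} + E$ into~(\ref{eq:SP}) gives
\[
x^{d_x}y^{d_y}\,T(x,y) = P_{\textbf{s}}(x,y)\cdot\bigl(x^{d_x}y^{d_y}P_{\textbf{s}}(\tfrac{1}{x},\tfrac{1}{y})\bigr) + x^{d_x}y^{d_y}\bigl(E(x,y) + E(\tfrac{1}{x},\tfrac{1}{y})\bigr),
\]
and since $d_x,d_y$ are exactly the $x$- and $y$-degrees of $P_{\textbf{s}}$, the first summand equals $P_{\textbf{s}}(x,y)\,P^{*}_{\textbf{s}}(x,y)$. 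Setting $\tilde{E}(x,y) := x^{d_x}y^{d_y}(E(x,y) + E(\tfrac{1}{x},\tfrac{1}{y}))$ and using that $E$ has at most $2t$ terms, $\tilde{E}$ has at most $4t$ terms, which is the claimed bound. (If a corrupted composition happens to contain more ones than $\textbf{s}$ itself, $\tilde{E}$ is to be read as a Laurent polynomial; this affects neither the term count nor the argument, as $P_{\textbf{s}}P^{*}_{\textbf{s}}$ is a genuine polynomial.)

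The crux is the weight-recovery step: the hint $c_w$ is genuinely necessary, because with up to $t$ errors $\text{wt}(\textbf{s})$ cannot be pinned down from $\tilde{S}_{\textbf{s}}$ alone, and the particular modulus $2t+1$ is forced by the fact that one composition error moves the relevant coefficient by at most $1$, so the uncertainty window has width exactly $2t$ and a single residue class resolves it. The remainder is a direct substitution into~(\ref{eq:SP}).
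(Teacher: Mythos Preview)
Your proof is correct and follows essentially the same approach as the paper: recover $d_x,d_y$ from the corrupted length-one data together with the residue $c_w$, then multiply $(n+1)+\tilde{S}_{\textbf{s}}+\tilde{S}_{\textbf{s}}(\tfrac{1}{x},\tfrac{1}{y})$ through by $x^{d_x}y^{d_y}$ and invoke~(\ref{eq:SP}). The paper is terser about the weight-recovery step (``given $c_w$, we can easily determine the degrees''), deferring the bit-counting argument to the proof of Lemma~\ref{lem:deca1}; your version just spells it out here.
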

\begin{proof} First, recall that $\tilde{S}_{\textbf{s}}(x,y) = S_{\textbf{s}}(x,y) + E(x,y)$ where $E(x,y)$ has at most $2t$ terms. Given $c_w$, we can easily determine the degrees $d_x$ and $d_y$ of the polynomial encoding of $\textbf{s}$. Next, we form $P_{\textbf{s}}(x,y) \, P^{*}_{\textbf{s}}(x,y)$ as follows:
\begin{align*}
&x^{d_x} y^{d_y} \, \left( n + 1 + \tilde{S}_{\textbf{s}}(x,y) + \tilde{S}_{\textbf{s}} \left(\frac{1}{x}, \frac{1}{y}\right) \right) \\
&= x^{d_x} y^{d_y} (n+1) + x^{d_x} y^{d_y} \, \times \\
   &\left(S_{\textbf{s}}(x,y) + E(x,y) 
  +  S_{\textbf{s}}\left(\frac{1}{x}, \frac{1}{y}\right) + E\left(\frac{1}{x},\frac{1}{y}\right) \right)   \\
&= P_{\textbf{s}}(x,y) \, P^{*}_{\textbf{s}}(x,y) + x^{d_x} y^{d_y} \left( E(x,y) + E \left( \frac{1}{x}, \frac{1}{y} \right) \right) \\
&= P_{\textbf{s}}(x,y) \, P^{*}_{\textbf{s}}(x,y) + \tilde{E}(x,y),
\end{align*}
where $\tilde{E}(x,y) = x^{d_x} y^{d_y} \left( E(x,y) + E \left( \frac{1}{x}, \frac{1}{y} \right) \right)$ has at most $4t$ nonzero terms, which proves the desired result.
\end{proof}
%
%

Let $\mathbb{F}_q$ be a finite field of order $q$, where $q$ is an odd prime. Let $\alpha \in \mathbb{F}_q$ be a primitive element of the field. For a polynomial $f(x) \in \mathbb{F}_q[x]$, let $\R(f)$ denote the set of its roots. We find the following result useful for our subsequent derivations.

\begin{theorem}\label{th:RS} (\cite[Ch.~5]{roth2006introduction}) Assume that $E(x) \in \mathbb{F}_q[x]$ has $\leq t$ terms. Then, $E(x)$ can be uniquely determined in $\cO(n^2)$ time given $E(\alpha^t), E(\alpha^{t-1}), \ldots, E(\alpha^0), E(\alpha^{-1}), \ldots, E(\alpha^{-t})$.
\end{theorem}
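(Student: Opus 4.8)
The plan is to treat this as a classical sparse–polynomial interpolation problem (Prony's method, equivalently BCH/Reed--Solomon syndrome decoding), of which the cited chapter of Roth's book is the standard reference; the work is in assembling the standard pieces and tracking the running time. Write $E(x) = \sum_{j=1}^{s} c_j\, x^{e_j}$ with $s \le t$, pairwise distinct exponents $e_j$, and nonzero coefficients $c_j \in \mathbb{F}_q$. Under the standing assumption that the field is large enough that every exponent occurring in $E$ is smaller than $q-1$ (which holds in our application, where $q$ is chosen as a prime exceeding the degree), the elements $X_j := \alpha^{e_j}$ are pairwise distinct and nonzero, since $\alpha$ is primitive. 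Set $a_i := E(\alpha^i) = \sum_{j=1}^{s} c_j X_j^{\,i}$ for $i \in \{-t,-t+1,\ldots,t\}$; these are exactly the $2t+1$ prescribed values.

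First I would prove uniqueness. Suppose $E_1,E_2$ each have at most $t$ terms and agree at $\alpha^{-t},\ldots,\alpha^{t}$. Their difference $D := E_1-E_2 = \sum_{\ell=1}^{m} d_\ell x^{f_\ell}$ has $m \le 2t$ nonzero terms, with distinct exponents $f_\ell < q-1$, and vanishes at the $2t+1$ nonzero points $\alpha^{i}$, $i=-t,\ldots,t$. Substituting $i = i'-t$ with $i' \in \{0,1,\ldots,2t\}$, the conditions $D(\alpha^{i})=0$ become a homogeneous system $V\cdot\operatorname{diag}(\alpha^{-t f_\ell})\,\mathbf{d}=\mathbf{0}$, where $V = \big((\alpha^{f_\ell})^{i'}\big)_{0\le i'\le 2t,\ 1\le \ell\le m}$ is a $(2t+1)\times m$ Vandermonde matrix with pairwise distinct nonzero nodes. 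Since $m \le 2t < 2t+1$, this $V$ has full column rank, forcing $\mathbf{d}=\mathbf{0}$ and $D\equiv 0$. Equivalently, a nonzero polynomial with at most $2t$ terms cannot vanish at $2t+1$ distinct powers of a primitive element within the exponent range $< q-1$.

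For the constructive part I would recover $E$ in three stages, exactly as in BCH decoding. \emph{Locator.} The sequence $(a_i)_{i=-t}^{t}$ is a linear combination of the $s$ geometric sequences $(X_j^{\,i})_i$, hence satisfies the order-$s$ linear recurrence whose characteristic polynomial is the locator $\Lambda(z)=\prod_{j=1}^{s}(z-X_j)$, on every window of $s+1$ consecutive indices inside $\{-t,\ldots,t\}$; two-sidedness causes no trouble precisely because every $X_j\neq 0$. Since $2t+1\ge 2s+1$, running Berlekamp--Massey on these values (equivalently, solving the relevant $s\times s$ Hankel system, which factors as $V^{\top}\operatorname{diag}(c_j)V$ with $V$ Vandermonde in the $X_j$ and is therefore invertible) yields $\Lambda(z)$ in $\cO(t^2)$ time. \emph{Exponents.} Evaluate $\Lambda$ by Horner's rule at the admissible candidate values $\alpha^{0},\alpha^{1},\ldots$ over the exponent range of $E$ (which has size $\cO(n)$ in our setting), at cost $\cO(nt)=\cO(n^2)$; its roots are the $X_j$, and the corresponding exponents $e_j$ are read off directly from this enumeration. \emph{Coefficients.} With the $X_j$ known, solve the transposed Vandermonde system $\sum_{j=1}^{s}c_j X_j^{\,i}=a_i$ (say for $i=0,\ldots,s-1$) for the $c_j$ in $\cO(t^2)$ time. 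Summing the costs gives the claimed $\cO(n^2)$ bound, and the Hankel/Vandermonde nonsingularity guarantees that the $E$ produced is the unique $\le t$-sparse solution.

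The obstacles here are purely bookkeeping: (a) reconciling the two-sided evaluation set $\{\alpha^{-t},\ldots,\alpha^{t}\}$ with the one-sided recurrence picture, which is immediate because nonzero $X_j$ makes the geometric sequences extend to negative indices; (b) the hypothesis of \emph{at most} $t$ (rather than exactly $t$) terms, which means one must work with the minimal recurrence — Berlekamp--Massey does this automatically — rather than fixing the Hankel dimension at $t$; and (c) the fact that all rank/nonsingularity claims rely on $\alpha$ being primitive and $q$ exceeding the relevant degree, so these assumptions must be threaded through. None of this is deep; the statement is essentially a repackaging of the standard syndrome-based decoder, which is why a pointer to Roth's text suffices.
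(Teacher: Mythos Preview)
Your proposal is correct and is precisely the standard Prony/Berlekamp--Massey syndrome-decoding argument that the cited chapter of Roth's book contains; the paper itself offers no proof of this theorem, only the citation, so you have simply unpacked the reference. The bookkeeping caveats you flag (two-sided evaluation set, at-most-$t$ rather than exactly-$t$ terms, and the need for $q-1$ to exceed the degree so that the $\alpha^{e_j}$ are distinct) are the right ones and are handled correctly.
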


\subsection{The Code Construction}
Our approach to constructing a $t$-error-correcting code of length $n$, denoted by $\mathcal{S}^{(t)}_{E}(n)$, relies on the fact that $\tilde{E}(x,y)$ may be written as:
\begin{align}\label{eq:exy}
\tilde{E}(x,y) =& ( a_{i_{1},1} y^{j_{i_1,1}} + \cdots + a_{i_1,m_{i_1}} y^{j_{i_1,m_{i_1}}}) x^{i_1} + \nonumber \\
&(a_{i_2,1} y^{j_{i_2,1}} + \cdots + a_{i_2,m_{i_2}} y^{j_{i_2,m_{i_2}}}) x^{i_2} +  \nonumber \\
& \qquad \qquad \quad \quad \, \vdots \\
&(a_{i_h,1} y^{j_{i_h,1}} + \cdots + a_{i_h,m_{i_h}} y^{j_{i_h,m_{i_h}}}) x^{i_h}, \nonumber
\end{align}
where each $a_{i,j} \in \{-1,1\}$, $h \leq 4t$ and the total number of nonzero terms is $\leq 4t$. Since $\tilde{E}(x,y)$ is restricted to have at most $4t$ nonzero terms, each of the polynomials $(a_{i_\ell,1} y^{j_{i_\ell,1}} + \cdots + a_{i_\ell,m_{i_\ell}} y^{j_{i_\ell,m_{i_\ell}}})$ can contain at most $4t$ nonzero terms. Consequently, one has $m_{i_\ell} \leq 4t$ for all $\ell \in \{1,2,\ldots,h\}$. 

Based on the previous observations we are ready to introduce our first code construction described in the lemma that follows. Henceforth, we assume that $P_{\textbf{s}}(x,y)$ is a bivariate polynomial over the field $\mathbb{F}_q$ where $q=2n+1$ is an odd prime. Clearly, for a $P_{\textbf{s}}(x,y) \in \mathbb{I}[x,y]$ over the integers, one can obtain $P_{\textbf{s}}(x,y) \in \mathbb{F}_q[x,y]$ by simply applying the modulo $q$ operation on $P_{\textbf{s}}(x,y)$.

\begin{lemma}\label{lem:deca1} Let $\C \subseteq \{0,1\}^n$ be a collection of strings $\textbf{s}$ that satisfy
\begin{align*}
\emph{wt}(\textbf{s}) \bmod 2t+1 &= 0, \\
\{1,\alpha,\alpha^2, \ldots, \alpha^{4t} \} &\subseteq \R(P_{\textbf{s}}(x,1)),\\
\{1,\alpha,\alpha^2, \ldots, \alpha^{4t} \} &\subseteq \R(P_{\textbf{s}}(x,\alpha)),\\
                                            & \, \vdots \\
\{1,\alpha,\alpha^2, \ldots, \alpha^{4t} \} &\subseteq \R(P_{\textbf{s}}(x,\alpha^{4t})).
\end{align*}
Then, $\C$ is a $t$-error-correcting code.
\end{lemma}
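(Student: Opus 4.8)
The plan is to show that the constraints in Lemma~\ref{lem:deca1} let us recover $P_{\textbf{s}}(x,y)$ exactly from the corrupted composition multiset $C'(\textbf{s})$, which by the factorization argument of~\cite{acharya2014string} determines $\textbf{s}$ up to reversal, and the reconstruction code property then guarantees unique decoding. Concretely, given $C'(\textbf{s})$ we form the polynomial $\tilde S_{\textbf{s}}(x,y)$. The constraint $\emph{wt}(\textbf{s}) \bmod 2t+1 = 0$ plays the role of $c_w$ in Claim~\ref{cl:equiv} (since at most $2t$ composition errors perturb $w_1(\textbf{s})=\emph{wt}(\textbf{s})$ by at most $2t$ in absolute value, the residue modulo $2t+1$ pins down the true weight, hence $d_x$ and $d_y$). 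So by Claim~\ref{cl:equiv} we can compute $Q(x,y) := P_{\textbf{s}}(x,y)\,P^{*}_{\textbf{s}}(x,y) + \tilde E(x,y)$ with $\tilde E$ having at most $4t$ nonzero terms.

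Next I would strip off the error term $\tilde E(x,y)$ one "$x$-slice" at a time using Theorem~\ref{th:RS}. Write $Q(x,y) = \sum_i Q_i(y)\,x^i$ and $\tilde E(x,y) = \sum_i \tilde E_i(y)\,x^i$; by the decomposition~\eqref{eq:exy} each nonzero $\tilde E_i(y)$ has at most $4t$ terms and there are at most $4t$ nonzero slices. The key point is that $P_{\textbf{s}}(x,1) \bmod (\text{its root constraints})$ gives the evaluations we need: substituting $y = \alpha^j$ for $j \in \{0,1,\ldots,4t\}$ into $Q(x,y)$ and then evaluating the resulting univariate polynomial in $x$ at $x = \alpha^\ell$ for $\ell \in \{0,\pm 1,\ldots,\pm\}$ — wait, more carefully: the constraint $\{1,\alpha,\ldots,\alpha^{4t}\}\subseteq \R(P_{\textbf{s}}(x,\alpha^j))$ for each $j \in \{0,\ldots,4t\}$ means $P_{\textbf{s}}(\alpha^\ell,\alpha^j) = 0$, hence $P_{\textbf{s}}(x,y)P^{*}_{\textbf{s}}(x,y)$ vanishes at all these $(\alpha^\ell,\alpha^j)$ as well. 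Therefore, for each fixed $j$, evaluating $Q(\alpha^\ell,\alpha^j)$ over a suitable set of $\ell$'s yields $2\cdot 4t + 1$ consecutive-power evaluations of the univariate error polynomial $\tilde E(x,\alpha^j)$, which has $\le 4t$ terms; by Theorem~\ref{th:RS} (applied with parameter $4t$) this recovers $\tilde E(x,\alpha^j)$ for each of the $4t+1$ values of $j$. Since the coefficient in $x^i$ of $\tilde E$, namely $\tilde E_i(y)$, has at most $4t$ terms, knowing $\tilde E_i(\alpha^j)$ for $4t+1$ distinct $j$'s lets us invoke Theorem~\ref{th:RS} once more (in the variable $y$) to recover each $\tilde E_i(y)$, and thus all of $\tilde E(x,y)$. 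Subtracting gives $P_{\textbf{s}}(x,y)\,P^{*}_{\textbf{s}}(x,y)$ exactly.

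From the exact product $P_{\textbf{s}}(x,y)P^{*}_{\textbf{s}}(x,y)$ one recovers $P_{\textbf{s}}(x,y)$, and hence $\textbf{s}$ up to reversal, exactly as in~\cite{acharya2014string}: read off $S_{\textbf{s}}(x,y)$ from the lower-degree part via~\eqref{eq:SP}, and reconstruct the string from the true composition multiset. Since $C(\textbf{s}) = C(\textbf{s}^r)$ and reversal preserves all the imposed constraints, this is the best possible, and $\C$ being a set of strings each satisfying these constraints is by construction a reconstruction code in the sense of~\cite{pattabiraman2019reconstruction} (one should check, or cite, that the constrained set indeed has the distinguishability property — this follows because distinct codewords give distinct $P_{\textbf{s}}$ up to reversal, hence distinct $C(\cdot)$). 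Putting the pieces together: the decoder computes $\tilde S_{\textbf{s}}$, uses the weight residue to get the degrees, forms $Q$, peels off $\tilde E$ by two rounds of Theorem~\ref{th:RS}, factors to get $P_{\textbf{s}}$, and outputs $\textbf{s}$; all steps run in $\cO(\mathrm{poly}(n))$ time.

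The main obstacle I anticipate is bookkeeping the two-dimensional error-peeling correctly: one must verify that the evaluation points $\alpha^\ell$ used in the $x$-direction are both (i) enough of them — $2(4t)+1$ consecutive powers $\alpha^{-4t},\ldots,\alpha^{4t}$ — to meet the hypothesis of Theorem~\ref{th:RS} for a $(4t)$-sparse polynomial, and (ii) actually available as evaluations of $Q$, i.e. that $P_{\textbf{s}}P^{*}_{\textbf{s}}$ genuinely vanishes there so that $Q(\alpha^\ell,\alpha^j) = \tilde E(\alpha^\ell,\alpha^j)$; the latter needs $P_{\textbf{s}}(\alpha^\ell,\alpha^j)=0$ which the root constraints give for $\ell,j \in \{0,\ldots,4t\}$, and negative powers of $\alpha$ enter through the reciprocal $P^{*}_{\textbf{s}}$ — so one must be careful about which factor contributes the vanishing at which point, and that $q = 2n+1$ is large enough that all these powers of $\alpha$ are distinct. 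A secondary point to handle cleanly is that $\tilde E(x,y)$ may have terms with negative $y$-exponents (it came from $E(1/x,1/y)$ scaled by $x^{d_x}y^{d_y}$), so "sparse polynomial in $y$" should be read as sparse Laurent polynomial, and Theorem~\ref{th:RS} applied after a harmless shift by a power of $y$.
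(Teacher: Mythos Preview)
Your proposal follows essentially the same route as the paper: use the weight residue and Claim~\ref{cl:equiv} to form $F(x,y)=P_{\textbf{s}}(x,y) P^{*}_{\textbf{s}}(x,y) + \tilde{E}(x,y)$, observe that the root constraints (together with the reciprocal symmetry $P_{\textbf{s}} P^{*}_{\textbf{s}}(\beta,\beta')=0 \Rightarrow P_{\textbf{s}} P^{*}_{\textbf{s}}(\beta^{-1},\beta'^{-1})=0$) make $F$ equal $\tilde{E}$ on a grid of evaluation points, and then recover $\tilde{E}$ by two nested applications of Theorem~\ref{th:RS}, first in $x$ and then in the $y$-coefficients $M_{i_\ell}(y)$.

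Two small notes. First, the paper's decoder stops once $\tilde{E}$ (hence $E$, hence $S_{\textbf{s}}$) is recovered: in this paper ``$t$-error-correcting'' means undoing $t$ composition errors in $C(\textbf{s})$, and unique string reconstruction from the corrected multiset is handled separately by the surrounding reconstruction-code machinery. So your extra step of factoring $P_{\textbf{s}} P^{*}_{\textbf{s}}$ and your distinguishability caveat are not needed to establish this lemma. Second, in the second decoding round you wrote ``$4t+1$ distinct $j$'s'', but Theorem~\ref{th:RS} needs $8t+1$ consecutive-power evaluations for a $4t$-sparse polynomial; the paper obtains these by taking $\ell_2 \in \{-4t,\ldots,4t\}$ via exactly the reciprocal argument you already flagged as the main obstacle in the $x$-direction --- the same fix applies symmetrically to $y$.
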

\begin{proof} We prove the claim by describing a decoding algorithm that for any given $\tilde{S}_{\textbf{s}}(x,y)$, which is the result of at most $t$ composition errors occurring in $S_{\textbf{s}}(x,y)$, uniquely recovers $S_{\textbf{s}}(x,y)$.

Since there are at most $t$ erroneous compositions in $\tilde{S}_{\textbf{s}}(x,y)$, one can determine $\text{wt}(\textbf{s})$ by summing up the length-one compositions (i.e., the bits) in $\tilde{S}_{\textbf{s}}(x,y)$ along with the fact that $\text{wt}(\textbf{s}) \bmod 2t+1 = 0$. Therefore, from Claim~\ref{cl:equiv}, we can construct the polynomial
\begin{align}\label{eq:l1deceq}
F(x,y) = P_{\textbf{s}}(x,y) \, P^{*}_{\textbf{s}}(x,y)  + \tilde{E}(x,y),
\end{align} 
where $\tilde{E}(x,y)$ has at most $4t$ nonzero terms. Suppose that $\beta, \beta' \in \mathbb{F}_q$. First, observe that if $P_{\textbf{s}}(\beta,\beta') \, P^{*}_{\textbf{s}}(\beta, \beta') = 0$, then $P_{\textbf{s}}(\frac{1}{\beta},\frac{1}{\beta'}) \, P^{*}_{\textbf{s}}(\frac{1}{\beta},\frac{1}{\beta'}) = 0$ which immediately follows from the definition of $P^{*}_{\textbf{s}}(x,y)$. Since $\{1,\alpha,\alpha^2, \ldots, \alpha^{4t} \} \subseteq \R(P_{\textbf{s}}(\alpha^{\ell_1},y))$ for all $\ell_1 \in \{ 0,1,\ldots,4t \}$, and similarly $\{1,\alpha,\alpha^2, \ldots, \alpha^{4t} \} \subseteq \R(P_{\textbf{s}}(x,\alpha^{\ell_2}))$ for all $\ell_2 \in \{ 0,1,\ldots,4t \},$ it follows that $F(\alpha^{\ell_1},\alpha^{\ell_2})=\tilde{E}(\alpha^{\ell_1},\alpha^{\ell_2})$. Hence, we have:
\begin{align*}
\tilde{E}(\alpha^{\ell_1},&\alpha^{\ell_2}) = \\
& \quad \, \, \big( a_{{i_1,1}} \alpha^{{\ell_2} \times j_{i_1,1}} + \cdots + a_{i_1,m_{i_1}} \alpha^{{\ell_2} \times j_{i_1,m_{i_1}}} \big ) \alpha^{\ell_1 \times i_1}\\
&+ \big ( a_{{i_2,1}} \alpha^{{\ell_2} \times j_{i_2,1}} + \cdots + a_{{i_2,m_{i_2}}} \alpha^{{\ell_2} \times j_{i_2,m_{i_2}}} \big) \alpha^{\ell_1 \times i_2}  \\
&  \qquad \qquad \qquad \qquad \, \quad \vdots  \\ 
&+ \big( a_{{i_h,1}} \alpha^{{\ell_2} \times j_{i_h,1}} + \cdots + a_{{i_h,m_{i_h}}} \alpha^{{\ell_2} \times j_{i_h,m_{i_h}}} \big) \alpha^{\ell_1 \times i_h},
\end{align*}
for ${\ell_1},{\ell_2} \in  \{ 0,1,\ldots,4t,-1,-2,\ldots,-4t \}$. From Theorem~\ref{th:RS}, for any fixed $\ell_2$ we know the evaluations $\tilde{E}(\alpha^{\ell_1}, \alpha^{\ell_2})$ for $\ell_1 \in \{0,1,\ldots,4t,-1,-2, \ldots,-4t\}$, so that we can recover the following polynomials:
\begin{align}\label{eq:l1efy}
\tilde{E}(x,\alpha^{\ell_2}) &= \big( a_{{i_1,1}} \alpha^{{\ell_2} \times j_{i_1,1}} + \cdots + a_{i_1,m_{i_1}} \alpha^{{\ell_2} \times j_{i_1,m_{i_1}}} \big ) x^{i_1}  \nonumber \\
&+ \big ( a_{{i_2,1}} \alpha^{{\ell_2} \times j_{i_2,1}} + \cdots + a_{{i_2,m_{i_2}}} \alpha^{{\ell_2} \times j_{i_2,m_{i_2}}} \big) x^{i_2}  \nonumber  \\
&\qquad \qquad \qquad \qquad \, \quad \vdots \nonumber \\
&+ \big( a_{{i_h,1}} \alpha^{{\ell_2} \times j_{i_h,1}} + \cdots + a_{j_{i_h,m_{i_h}}} \alpha^{{\ell_2} \times j_{i_h,m_{i_h}}} \big) x^{i_h},
\end{align} 
using the decoder for a cyclic Reed-Solomon code, which has complexity $\cO(n^2)$. 

Let 
$$M_{i_\ell}(y) =  a_{{i_\ell,1}} y^{j_{i_\ell,1}} + \cdots + a_{i_\ell,m_{i_{\ell}}} y^{j_{i_\ell,m_{i_\ell}}}$$
be the polynomial multiplier of $x^{i_\ell}$ in $\tilde{E}(x,y)$. From the previous discussion, we know that the maximum number of nonzero terms in $M_{i_\ell}(x)$ is $4t$. Using (\ref{eq:l1efy}), we can determine $M_{i_\ell}(\alpha^{\ell_2})$ for $\ell_2 \in \{0,1,2,\ldots,4t, -1, -2, \ldots, -4t\}$. Due to Theorem~\ref{th:RS}, this implies that we can recover $M_{i_\ell}(y)$ for $\ell\in \{1,2,\ldots, h \}$ once again using a decoder for a Reed-Solomon code. Since $\tilde{E}(x,y) = M_{i_1}(y) x^{i_1} + M_{i_2}(y) x^{i_2} + \cdots + M_{i_h}(y) x^{i_h}$, we can determine $E(x,y)$ and subsequently reconstruct $S_{\textbf{s}}(x,y)$ given $\tilde{S}_{\textbf{s}}(x,y)$.
\end{proof}

The following corollary follows immediately from Lemma~\ref{lem:deca1}. 

\begin{corollary}\label{cor:sideinfo} Let $\C \in \{0,1\}^n$ be a collection of strings $\textbf{s}$ that satisfy
\begin{align*}
P_{\textbf{s}}(\alpha^{\ell_1}, \alpha^{\ell_2}) = a_{\ell_1,\ell_2} \text{  and  } \emph{wt}(\textbf{s}) \equiv a \bmod 2t+1,
\end{align*}
for all $\ell_1,\ell_2 \in \{0,1,\ldots, 4t\}$, and where $(a_{\ell_1,\ell_2})_{\ell_1=0, \ell_2=0}^{4t}$ is an arbitrary vector from $\mathbb{F}_q^{ (4t+1)^2}$ and $a \in \{0,1,\ldots, 2t+1\}$. Then, $\C$ corrects $t$ composition errors. 
\end{corollary}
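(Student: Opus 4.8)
The plan is to reduce Corollary~\ref{cor:sideinfo} to Lemma~\ref{lem:deca1} by showing that the two specifications of the codebook are essentially the same, up to a harmless change of the ``target values'' that the polynomial $P_{\textbf{s}}$ must hit. In Lemma~\ref{lem:deca1} we required $P_{\textbf{s}}(\alpha^{\ell_1},\alpha^{\ell_2})=0$ for all $\ell_1,\ell_2\in\{0,1,\ldots,4t\}$ together with $\text{wt}(\textbf{s})\bmod(2t+1)=0$. The corollary instead fixes $P_{\textbf{s}}(\alpha^{\ell_1},\alpha^{\ell_2})=a_{\ell_1,\ell_2}$ for an arbitrary prescribed array $(a_{\ell_1,\ell_2})$ and fixes the weight residue to an arbitrary $a$. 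So the first step is to observe that the decoding argument of Lemma~\ref{lem:deca1} never actually used that the prescribed values were zero or that the weight residue was zero: it only used that the decoder \emph{knows} these values in advance. Concretely, at the decoder we are handed $\tilde{S}_{\textbf{s}}(x,y)$, we recover $\text{wt}(\textbf{s})$ exactly by summing the length-one compositions and correcting modulo $2t+1$ using the known residue $a$ (this works because at most $t<2t+1$ length-one compositions are corrupted, so the erroneous sum differs from the true weight by at most $t$ in each corrupted coordinate but the total ambiguity is resolved by the residue), and then by Claim~\ref{cl:equiv} we reconstruct $F(x,y)=P_{\textbf{s}}(x,y)P^{*}_{\textbf{s}}(x,y)+\tilde{E}(x,y)$ with $\tilde{E}$ having at most $4t$ nonzero terms.

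Second, I would redo the evaluation step with the nonzero targets. For $\ell_1,\ell_2\in\{0,1,\ldots,4t\}$ we now have $P_{\textbf{s}}(\alpha^{\ell_1},\alpha^{\ell_2})=a_{\ell_1,\ell_2}$, and by the definition of the reciprocal polynomial $P^{*}_{\textbf{s}}(\alpha^{\ell_1},\alpha^{\ell_2})=\alpha^{\ell_1 d_x}\alpha^{\ell_2 d_y}P_{\textbf{s}}(\alpha^{-\ell_1},\alpha^{-\ell_2})=\alpha^{\ell_1 d_x}\alpha^{\ell_2 d_y}a_{-\ell_1,-\ell_2}$ — but note that here we need the values of $P_{\textbf{s}}$ at negative powers of $\alpha$ as well, so the cleanest route is to demand (as Lemma~\ref{lem:deca1} implicitly does through its symmetric set of evaluation points) that the decoder knows $P_{\textbf{s}}(\alpha^{\ell_1},\alpha^{\ell_2})$ for $\ell_1,\ell_2$ ranging over $\{0,\pm1,\ldots,\pm 4t\}$; the values at negative exponents are forced by the values at positive exponents only when those are all zero, which is exactly why the corollary as literally stated may need the mild reading that the prescribed array is consistent with some genuine string, or equivalently that we store $P_{\textbf{s}}$ on the full symmetric grid. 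Granting that, the quantity $P_{\textbf{s}}(\alpha^{\ell_1},\alpha^{\ell_2})P^{*}_{\textbf{s}}(\alpha^{\ell_1},\alpha^{\ell_2})$ is a \emph{known} field element for every $(\ell_1,\ell_2)$ on the grid, so subtracting it from $F(\alpha^{\ell_1},\alpha^{\ell_2})$ yields $\tilde{E}(\alpha^{\ell_1},\alpha^{\ell_2})$ exactly as in the proof of Lemma~\ref{lem:deca1}.

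Third, once we have the evaluations $\tilde{E}(\alpha^{\ell_1},\alpha^{\ell_2})$ for all $\ell_1,\ell_2\in\{0,\pm 1,\ldots,\pm 4t\}$, the two-stage Reed--Solomon decoding in the proof of Lemma~\ref{lem:deca1} applies verbatim: for each fixed $\ell_2$, Theorem~\ref{th:RS} recovers the sparse polynomial $\tilde{E}(x,\alpha^{\ell_2})$ from its $8t+1$ evaluations in $x$ (since $\tilde{E}$ has $\le 4t$ terms in $x$), giving us $M_{i_\ell}(\alpha^{\ell_2})$ for each surviving $x$-exponent $i_\ell$; then for each $i_\ell$, a second application of Theorem~\ref{th:RS} recovers the sparse polynomial $M_{i_\ell}(y)$ (again $\le 4t$ terms) from its evaluations at $\alpha^{\ell_2}$, $\ell_2\in\{0,\pm1,\ldots,\pm 4t\}$. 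Reassembling gives $\tilde{E}(x,y)$, hence $E(x,y)$, hence $S_{\textbf{s}}(x,y)$, and the decoding runs in $\cO(n^2)$ time per Reed--Solomon call with $\cO(t)$ calls.

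The main obstacle is the bookkeeping subtlety flagged above: strictly as written, the corollary prescribes $P_{\textbf{s}}$ only on the nonnegative grid $\{0,\ldots,4t\}^2$, whereas the decoding needs $P_{\textbf{s}}P^{*}_{\textbf{s}}$ on the symmetric grid, which in turn needs $P_{\textbf{s}}$ at negative exponents too. In Lemma~\ref{lem:deca1} this was free because $P_{\textbf{s}}=0$ at positive exponents forces $P_{\textbf{s}}^{*}=0$ there too (and the product vanishes regardless of the negative-exponent values). With nonzero targets one must either (i) read the corollary as also fixing the values on the negative grid — which is still $\cO(t^2)$ field elements of side information, matching the claimed redundancy order — or (ii) restrict to arrays $(a_{\ell_1,\ell_2})$ arising from actual strings, so that the negative-exponent values are determined. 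I would adopt reading (i) in the write-up, note that it does not change the redundancy count $\O(t^2\log n)$, and otherwise the proof is an immediate transcription of the argument for Lemma~\ref{lem:deca1} with ``$0$'' replaced by the known constants ``$a_{\ell_1,\ell_2}$'' and ``$a$''. Everything else — the weight-recovery step, Claim~\ref{cl:equiv}, and the nested Reed--Solomon decoding — carries over unchanged.
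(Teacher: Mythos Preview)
Your proposal is correct and matches the paper's approach: the paper gives no separate proof of the corollary, stating only that it ``follows immediately from Lemma~\ref{lem:deca1}.'' You have spelled out what ``immediately'' means here --- namely that the decoding argument in Lemma~\ref{lem:deca1} only requires the decoder to \emph{know} the evaluations $P_{\textbf{s}}(\alpha^{\ell_1},\alpha^{\ell_2})$ and the weight residue, not that these be zero --- and your caveat about needing the values on the full symmetric grid $\{-4t,\ldots,4t\}^2$ (or equivalently restricting to arrays realized by actual strings) is a legitimate bookkeeping point that the paper leaves implicit.
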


\subsection{A Systematic Encoder $\E_{t,n}$}

We construct next a systematic encoder for the previously proposed codes. The focus is on a systematic encoder $\E_{t,n}.$ 

Let $r$ be the number of redundant bits in the proposed code construction. We will show in Theorem~\ref{th:mainth} that for all $n$, one has 
\begin{align*}
r \leq&\ 4 \Big[ (4t+1)^2 (\log (2n+1)+1) + \log (2t+1) \\
&+ t \left( \log (4t+1)^2 (\log (2n+1)+1) + \log (2t+1) \right) \Big] \\
&+ \frac{1}{2} \log(n).
\end{align*}
One can show that $r \leq 156 t^2 \log 8n$.
Thus, $r= \cO(t^2 \log n)$. Furthermore, $r \leq 156 t^2 \log 8k + 156 t^2 \left( \frac{1}{\kappa} \right)$, where $\kappa$ is supremum over all $\kappa > 0$ such that $n \geq (1+\kappa) 156 t^2 \log 8n$.

The encoder $\E_{t,n}$ takes as input the string $\textbf{u} \in \{0,1\}^{n-\hat{r}}$, where $\hat{r}>0$ is a redundancy to be specified in what follows, and it produces a string $\textbf{s}$. Note that the evaluations of the polynomial $P_{\textbf{s}}(x,y)$ are stored in vector-form 
$$ \left( w_1, w_2, \ldots, w_{\frac{\hat{r}}{2}} \right) \bmod 2,$$
where the cummulative weights $w_i$s of a composition multiset $C_{i}$ are as defined at the beginning of Section II. 

Let $\E_t : \{0,1\}^{m} \to \{0,1\}^{m + t \log m}$ be a systematic encoder for a code with minimum Hamming distance $2t+1$ that inputs a string of length $m$ and outputs a string of length $m+t\log m$. We will use this encoder with $m= (4t+1)^{2}+1$. The encoder inputs $\textbf{u} \in \{0,1\}^{n-\hat{r}}$ and outputs $\textbf{s} \in \{0,1\}^n$ while executing the following steps.
\vspace{3pt}
\hrule \vspace{0.5pt} \hrule
\vspace{3pt}
\textbf{Encoder} $\E_{t,n} : \{0,1\}^{n- \hat{r}} \to \{0,1\}^n$.
\vspace{3pt}
\hrule \vspace{0.5pt} \hrule
\vspace{3pt}
\textbf{Input} String $\textbf{u} \in \{0,1\}^{n- \hat{r}}$. 

\textbf{Output} Codestring $\textbf{s} \in \{0,1\}^{n}$ that corrects $t$ errors.
\vspace{3pt}
\hrule 
\vspace{3pt}
\begin{enumerate}
\item Let $\alpha \in \mathbb{F}_q$ be a primitive element and $q$ be an odd prime $\geq 2n+1$. For $\ell_1,\ell_2 \in \{0,1,\ldots, 4t\}$, set $a_{\ell_1,\ell_2} = P_{\textbf{u}}(\alpha^{\ell_1}, \alpha^{\ell_2})$, $\ba = (a_{\ell_1,\ell_2})_{\ell_1=0, \ell_2=0}^{4t}$. \\
Let $a=\text{wt}(\textbf{u}) \bmod 2t+1$. 
\item Let $\bf{\bar{s}} = \E_t($$a$, $\ba) \in \{0,1\}^{\frac{\hat{r}}{4}}$.
\item For $j \in \{1,2, \ldots, \frac{\hat{r}}{2} \}$, define $\textbf{z}=(z_1 \ldots z_{\frac{\hat{r}}{2}})$ as 
\begin{align*}
z_{j} = \begin{cases}
 \sum_{i=1}^{j-1} z_{i} \bmod 2, &\text{ if $j$ is odd and } \bar{s}_{\frac{j+1}{2}}=0,\\
 \sum_{i=1}^{j-1} z_{i} + 1 \bmod 2, &\text{ if $j$ is odd and } \bar{s}_{\frac{j+1}{2}} = 1, \\
 0, &\text{ if $j$ is even.}
\end{cases}
\end{align*}
\item Set $\textbf{s} =  \textbf{0} \, \textbf{u} \, \textbf{z}  \in \{0,1\}^n$, where $\textbf{0}$ is an all-zero string of length $\frac{\hat{r}}{2}$.
\end{enumerate}
\vspace{3pt}
\hrule \vspace{0.5pt} \hrule
\vspace{3pt}
The $t$-error-correcting code $\S_{E}^{(t)}(n)$ is generated by the following two-step procedure:
\begin{itemize}
\item An information string of length $k$ is first encoded using the reconstruction code described in~\cite{pattabiraman2019reconstruction}, resulting in the string $\textbf{u} \in \S_{R}(n - \hat{r})$, where $\S_{R}(n - \hat{r})$ stands for the underlying reconstruction code.
\item The string $\textbf{u}$ is passed through the encoder $\E_{t,n}$, resulting in the codestring $\textbf{s}  = \E_{t,n} (\textbf{u}) \in \S_{E}^{(t)}(n)$. 
\end{itemize}
Consequently, we should have $\hat{r} = r- \left( \frac{1}{2} \log(n) \right)$. 

Thus, the number of redundancy bits is calculated as follows: 1) Since $\mathbb{F}_q$ is over a prime $q \geq 2n+1$, every $\alpha_{\ell_1,\ell_2}$, $\ell_1, \ell_2 \in \{0,1, \dots 4t \}$ requires at most $1 + \log (2n + 1)$ (as given any positive integer $x$, there exits a prime number between $x$ and $2x$). 2) Note that $a$ requires $\log 2t +1$. Thus, $\frac{\hat{r}}{4}$ is at most $(4t+1)^2 (1+\log (2n+1)) + \log (2t+1) + t \log ((4t+1)^2 (1+\log (2n+1)) + \log (2t+1)) $. 3) As mentioned earlier, the reconstruction string $\textbf{u}$ requires $r \leq \frac{1}{2}\log n$ redundancy bits. Thus, the encoder $\E_{t,n}$ requires $\O (t^2 \log n)$ additional bits. 

We find the following claims useful in our subsequent derivations.

\begin{claim}\label{cl:zr} At Step 3) of the encoding procedure, for odd $j \in [\frac{\hat{r}}{2}]$, one has $\bar{s}_{\frac{j+1}{2}} = \sum_{i=1}^j z_i \bmod 2.$
\end{claim}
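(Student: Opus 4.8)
The plan is to verify the identity by a direct substitution of the recursive rule from Step 3 of the encoder, splitting into the two relevant cases. Fix an odd index $j \in [\frac{\hat r}{2}]$, and abbreviate $b := \bar s_{\frac{j+1}{2}} \in \{0,1\}$ and $\Sigma := \sum_{i=1}^{j-1} z_i$. The two odd-$j$ branches in the definition of $z_j$ can be combined into the single congruence $z_j \equiv \Sigma + b \pmod 2$: the first branch ($z_j \equiv \Sigma$) is exactly the case $b=0$, and the second branch ($z_j \equiv \Sigma + 1$) is exactly the case $b=1$.

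Next I would simply add $z_j$ to the partial sum that defines it:
\begin{align*}
\sum_{i=1}^{j} z_i \;=\; \Sigma + z_j \;\equiv\; \Sigma + (\Sigma + b) \;=\; 2\Sigma + b \;\equiv\; b \pmod 2 ,
\end{align*}
which is precisely the assertion $\bar s_{\frac{j+1}{2}} = \sum_{i=1}^{j} z_i \bmod 2$. The one point worth a remark is the smallest case $j = 1$, where $\Sigma$ is an empty sum and equals $0$; the computation is unchanged since $z_1 \equiv b$ and $\sum_{i=1}^{1} z_i = z_1 \equiv b \pmod 2$.

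There is no genuine obstacle here: the claim is an immediate consequence of the fact that $z_j$ is defined so that appending it to $\Sigma$ doubles $\Sigma$ modulo $2$ and leaves only the offset bit $b$. Accordingly I would present the argument as the short two-line computation above rather than dressing it up as a formal induction on $j$.
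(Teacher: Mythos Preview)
Your argument is correct and matches the paper's approach: the paper simply states that the claim ``obviously follows from the definition of the string $\textbf{z}$,'' and your two-line computation is exactly the unpacking of that obvious step.
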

This claim obviously follows from the definition of the string $\textbf{z}$.

Recall next that for a string $\textbf{s} \in \{0,1\}^n$, its $\Sigma^{n/2}$ sequence $( \sigma_1, \sigma_2, \ldots, \sigma_{\frac{n}{2}}) \in \{0,1,2\}^{\frac{n}{2}}$ equals $\sigma_i = s_i + s_{n+1-i}$. As a result of Step 4) of encoding with $\E_{t,n}$, we have the next claim.

\begin{claim}\label{cl:rs} For $j \in [\frac{\hat{r}}{2}]$,
\begin{align*}
z_j = \sigma_j.
\end{align*}
\end{claim}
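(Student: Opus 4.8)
The plan is to unwind the definition of $\textbf{s}$ produced in Step~4 of the encoder $\E_{t,n}$ and read $\sigma_j$ off coordinate-by-coordinate. Note that the recurrence defining $\textbf{z}$ in Step~3 plays no role in this claim; it is only used in Claim~\ref{cl:zr}. So the proof should rest entirely on the \emph{placement} of the three blocks in $\textbf{s}$ together with the definition $\sigma_i = s_i + s_{n+1-i}$.

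Concretely, I would argue as follows. Recall $\textbf{s} = \textbf{0}\,\textbf{u}\,\textbf{z}$ with $\textbf{0} \in \{0,1\}^{\hat r/2}$ the all-zero block, $\textbf{u} \in \{0,1\}^{n-\hat r}$, and $\textbf{z} \in \{0,1\}^{\hat r/2}$; thus $s_i = 0$ for every $i \in [\hat r/2]$, since these coordinates all lie in the zero prefix. For $j \in [\hat r/2]$ we then have $\sigma_j = s_j + s_{n+1-j} = s_{n+1-j}$. Since $j \le \hat r/2$, the index $n+1-j$ lies in $\{n-\hat r/2+1,\dots,n\}$, i.e.\ in the block occupied by $\textbf{z}$; tracking positions, coordinate $n+1-j$ of $\textbf{s}$ is exactly the coordinate of the suffix block that mirrors coordinate $j$ of the prefix, namely $z_j$. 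Hence $\sigma_j = z_j$ for all $j \in [\hat r/2]$, which is the claim.

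The only delicate point, and the one place I expect any friction, is the index bookkeeping across the three concatenated blocks of $\textbf{s}$; there is no genuine obstacle. If the suffix block is instead indexed left to right, the same computation gives the equivalent form $\sigma_j = z_{\hat r/2 + 1 - j}$, which serves the decoder identically. This identity is precisely what the decoder needs: having recovered $S_{\textbf{s}}(x,y)$ (via Lemma~\ref{lem:deca1}), it obtains all cumulative weights $w_l(\textbf{s})$ and hence the whole sequence $\Sigma^{n/2}$, reads off $\textbf{z}$ on its first $\hat r/2$ coordinates through Claim~\ref{cl:rs}, and then recovers $\bar{\textbf{s}} = \E_t(a,\ba)$, and thus $(a,\ba)$, through Claim~\ref{cl:zr}.
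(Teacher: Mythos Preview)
Your proposal is correct and matches the paper's approach exactly: the paper's entire justification is the one-line remark ``As a result of Step~4) of encoding with $\E_{t,n}$,'' and you have simply spelled out what that means, namely that the all-zero prefix forces $s_j=0$ for $j\in[\hat r/2]$ so that $\sigma_j=s_{n+1-j}$ lands in the $\textbf{z}$-block. Your caveat about the left-to-right versus mirrored indexing of the suffix is well taken (and is indeed the only place where any care is needed), but the paper implicitly adopts whichever orientation makes $z_j=\sigma_j$ hold literally, and as you note the distinction is immaterial for the downstream use in Claim~\ref{cl:rw} and the decoder.
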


The next claim connects the quantities $w_i$ and $\bar{\textbf{s}}$, defined in Step 2 of the encoding procedure.

\begin{claim}\label{cl:rw} For $j \in [\frac{\hat{r}}{4}]$, it holds
$$ w_{2j} \mod 2 = \bar{s}_{j}. $$
\end{claim}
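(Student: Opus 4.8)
The plan is to compute $w_{2j}(\textbf{s}) \bmod 2$ directly from the definition of the cumulative weight --- namely, a weighted count of the $1$'s over all length-$2j$ windows of $\textbf{s}$ --- then use the all-zero prefix and the ``bulk'' of the string to annihilate all but a short boundary sum, and finally convert that boundary sum into $\bar{s}_j$ via Claims~\ref{cl:rs} and~\ref{cl:zr}. First I would write $w_{2j}(\textbf{s}) = \sum_{p=1}^{n} c_{p}\, s_p$, where $c_p$ is the number of length-$2j$ substrings of $\textbf{s}$ containing position $p$; an elementary count gives $c_p = 2j$ for $2j \le p \le n-2j+1$, $c_p = n-p+1$ for $p \ge n-2j+2$, and $c_p = p$ for $p \le 2j-1$. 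Since $j \in [\hat{r}/4]$ and $\hat{r}/4$ is an integer we have $2j \le \hat{r}/2$, and since $\hat{r} < n$ (the code carries information) we get $2j \le \hat{r}/2 < n/2 \le n-2j+1$, so the three ranges partition $[n]$. Reducing mod $2$: the bulk range contributes $0$ because its coefficient $2j$ is even, and the left fringe $p \le 2j-1$ lies inside the all-zero prefix $\textbf{0}$ of length $\hat{r}/2$ (as $2j-1 < \hat{r}/2$), so there $s_p=0$. Hence, with the substitution $i = n+1-p$,
\begin{align*}
w_{2j}(\textbf{s}) \equiv \sum_{p=n-2j+2}^{n} (n-p+1)\, s_p = \sum_{i=1}^{2j-1} i\, s_{n+1-i} \pmod{2}.
\end{align*}

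Next I would pass from $\textbf{s}$ to $\textbf{z}$. For every $i \le 2j-1 < \hat{r}/2$ position $i$ is in the zero prefix, so $s_i = 0$, and Claim~\ref{cl:rs} then gives $z_i = \sigma_i = s_i + s_{n+1-i} = s_{n+1-i}$ (here we simply invoke Claim~\ref{cl:rs} as a black box rather than re-deriving how $\textbf{z}$ sits at the tail of $\textbf{s}$). Substituting, and using that Step~3 of $\E_{t,n}$ forces $z_i = 0$ for even $i$ --- so that $i z_i \equiv z_i \pmod 2$ for $i$ odd and $i z_i = 0$ for $i$ even --- we obtain
\begin{align*}
w_{2j}(\textbf{s}) \equiv \sum_{i=1}^{2j-1} i\, z_i \equiv \sum_{i=1}^{2j-1} z_i \pmod{2}.
\end{align*}
Finally, $2j-1$ is odd and lies in $[\hat{r}/2]$, so Claim~\ref{cl:zr} applied with $j \mapsto 2j-1$ (so that $\tfrac{j+1}{2} \mapsto j$) gives $\bar{s}_j = \sum_{i=1}^{2j-1} z_i \bmod 2$; combining with the last display yields $w_{2j}(\textbf{s}) \bmod 2 = \bar{s}_j$, as claimed.

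The main obstacle is the window-counting bookkeeping in the first step: pinning down the coefficients $c_p$ at the two fringes, and confirming that for $j \le \hat{r}/4$ the entire left fringe sits inside the all-zero prefix while $2j$ remains well below $n$ so that the bulk coefficient is exactly $2j$ (and even). Everything after that is a short chain of substitutions using the previously established Claims~\ref{cl:rs} and~\ref{cl:zr} together with the fact that the odd-indexed entries of $\textbf{z}$ carry all the information and the even-indexed ones vanish.
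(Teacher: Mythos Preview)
Your proof is correct and follows essentially the same route as the paper: both arguments reduce $w_{2j}\bmod 2$ to $\sum_{i=1}^{2j-1} z_i \bmod 2$ via the substring-window count and then invoke Claims~\ref{cl:rs} and~\ref{cl:zr}. The only cosmetic difference is that the paper packages the window count as the identity $w_i=\sum_{\ell<i}\ell\,\sigma_\ell+i\sum_{\ell\ge i}\sigma_\ell$ (working directly in the symmetric variables $\sigma_\ell$), whereas you carry out the same count explicitly in the positions $s_p$ and use the all-zero prefix to kill the left fringe before passing to $\sigma$ and $z$.
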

\begin{proof} The result follows by noting that
\begin{align*}
w_{2j} \equiv\ &2j w_1 - (2j-1) \sigma_1 - (2j-2) \sigma_2- \cdots - \sigma_{2j-1} \bmod 2 \\
\equiv\ &\sigma_1 + \sigma_3 + \cdots + \sigma_{2j-1} \bmod 2,
\end{align*}
where the first line follows from the fact that $$\frac{1}{i} \sigma_1 + \frac{2}{i} \sigma_2 + \dots + \frac{i-1}{i} \sigma_{i-1} +  \sigma_i + \sigma_{i+1} + \dots +  \sigma_{n/2} = \frac{1}{i} w_{i}. $$ From Claims~\ref{cl:zr} and \ref{cl:rs}, and the previous observation, and along with the fact that $z_j = 0$ for even values of $j$ in Step 3) of the encoding, we have
\begin{align*}
w_{2j} \equiv \sum_{i=1}^{2j-1} \sigma_j \equiv \sum_{i=1}^{2j-1} z_j \equiv \bar{s}_{j} \bmod 2.
\end{align*}
\end{proof}
The following result will be used to prove the main finding regarding the error-correction, as stated in Theorem~\ref{th:mainth}.
\begin{lemma}\label{lem:mclemma} The code defined as 
\begin{align*}
\C = \Big \{ \textbf{s} : \textbf{s} = \E_{t,n}(\textbf{u}), \textbf{u} \in \{0,1\}^{n - \hat{r}} \Big \}.
\end{align*}
is a $t$-error-correcting code.
\end{lemma}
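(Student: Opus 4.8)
The plan is to show that the code $\C$ in Lemma~\ref{lem:mclemma} is exactly an instance of the code family covered by Lemma~\ref{lem:deca1} (equivalently Corollary~\ref{cor:sideinfo}), so that its error-correcting capability follows immediately from the already-established decoding algorithm. Concretely, I would argue that for every $\textbf{s} = \E_{t,n}(\textbf{u})$ produced by the systematic encoder, the polynomial $P_{\textbf{s}}(x,y)$ satisfies the prescribed evaluation constraints $P_{\textbf{s}}(\alpha^{\ell_1},\alpha^{\ell_2}) = a_{\ell_1,\ell_2}$ for all $\ell_1,\ell_2 \in \{0,1,\ldots,4t\}$ together with $\text{wt}(\textbf{s}) \equiv a \bmod 2t+1$, where $\ba$ and $a$ are the quantities computed in Step~1 of the encoder. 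Once this is verified, Corollary~\ref{cor:sideinfo} gives the result directly.

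First I would set up the bookkeeping: write $\textbf{s} = \textbf{0}\,\textbf{u}\,\textbf{z}$ as in Step~4, with the leading block of $\frac{\hat r}{2}$ zeros, the information block $\textbf{u}$, and the trailing block $\textbf{z}$. The leading zeros contribute only a factor $y^{\hat r/2}$ to $P_{\textbf{s}}(x,y)$ relative to $P_{\textbf{u}}(x,y)$ (shifting the prefix polynomial), so the evaluation at $(\alpha^{\ell_1},\alpha^{\ell_2})$ is just $(\alpha^{\ell_2})^{\hat r/2}$ times the value coming from $\textbf{u}$; since these factors are known constants, matching $P_{\textbf{s}}(\alpha^{\ell_1},\alpha^{\ell_2})$ to the stored target is a matter of how the target $\ba$ is defined. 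The key point to nail down is that the trailing block $\textbf{z}$ has been engineered precisely so that its contribution to the relevant polynomial evaluations is recoverable from the redundancy bits: by Claims~\ref{cl:zr}, \ref{cl:rs}, and \ref{cl:rw}, the even-indexed cumulative weights $w_{2j} \bmod 2$ encode $\bar s_j$, which in turn is $\E_t(a,\ba)$, a distance-$2t+1$ encoding of the side information $(a,\ba)$. Thus the decoder, after using Lemma~\ref{lem:deca1}'s machinery to recover $P_{\textbf{s}}(x,y)\,P^*_{\textbf{s}}(x,y) + \tilde E(x,y)$ and then $S_{\textbf{s}}(x,y)$, can read off $(w_i \bmod 2)_i$ from the corrected composition multiset, extract $\bar{\textbf{s}}$, decode $\E_t$ to get $(a,\ba)$ exactly (tolerating the $\le t$ symbol errors in $\bar{\textbf{s}}$ caused by $\le t$ composition errors), and hence learn the evaluation constraints needed to run the Lemma~\ref{lem:deca1} decoder.

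The cleanest way to present this is: (i) observe that $\le t$ composition errors perturb at most $t$ of the bits $\bar s_j$ (each composition appears in exactly one $C_i$, and by Claim~\ref{cl:rw} $\bar s_j$ is determined by $w_{2j}\bmod 2$, each affected by corruptions only in compositions of a single length), so $\E_t$ decoding recovers $(a,\ba)$ correctly; (ii) with $(a,\ba)$ in hand, $\textbf{s}$ now satisfies exactly the hypotheses of Corollary~\ref{cor:sideinfo} with the recovered target vector, so the remainder of the Lemma~\ref{lem:deca1} argument (recovering $\tilde E(x,y)$ via the nested Reed--Solomon decoders, then $S_{\textbf{s}}(x,y)$) applies verbatim; (iii) conclude $\C$ is $t$-error-correcting. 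I expect the main obstacle to be step~(i): carefully verifying that $\le t$ composition errors really do translate into $\le t$ coordinate errors in the vector $\bar{\textbf{s}}$ fed to the $\E_t$ decoder — one must check that the parity relations in Claim~\ref{cl:rw} don't amplify a single composition error into several flipped $\bar s_j$'s, and that the indices line up so $\E_t$'s minimum distance $2t+1$ suffices. Everything else is a matter of invoking the already-proven Lemma~\ref{lem:deca1} and Claims~\ref{cl:zr}--\ref{cl:rw} and tracking the $y^{\hat r/2}$ shift bookkeeping.
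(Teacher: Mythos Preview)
Your overall strategy is the same as the paper's: extract $(a,\ba)$ from the corrupted multiset via the parities $w_{2j}\bmod 2$ and the distance-$(2t+1)$ code $\E_t$, then feed these into Corollary~\ref{cor:sideinfo}. Your third paragraph, step~(i), is exactly how the paper proceeds, and your worry about ``amplification'' is unfounded: each composition error touches at most one length class $C_{2j}$, hence at most one coordinate $\tilde w_{2j}\bmod 2$, so $d_H(\bw\bmod 2,\tilde\bw\bmod 2)\le t$ as needed.

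There are, however, two genuine gaps. First, your second paragraph is circular: you propose to run ``Lemma~\ref{lem:deca1}'s machinery to recover \ldots\ $S_{\textbf{s}}(x,y)$'' and \emph{then} read off $(w_i\bmod 2)$ to get the side information---but running Lemma~\ref{lem:deca1} already requires knowing the evaluations. The correct order (which you do recover in paragraph~3) is to read the \emph{noisy} parities $\tilde w_{2j}\bmod 2$ directly from $\tilde S_{\textbf{s}}$, decode $\E_t$, and only then invoke Corollary~\ref{cor:sideinfo}. Second, and more substantively, $\ba$ stores the evaluations of $P_{\textbf{u}}$, not of $P_{\textbf{s}}$, and your claim that the leading zeros contribute ``only a factor $y^{\hat r/2}$'' is false: the trailing block $\textbf{z}$ also contributes, and $\textbf{z}$ is data-dependent. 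The paper handles this by first reconstructing $\textbf{z}$ explicitly from $\bar{\textbf{s}}$ (Step~3 of the encoder is a deterministic map $\bar{\textbf{s}}\mapsto\textbf{z}$), and then using the concatenation identity
\[
P_{\textbf{s}}(x,y)=P_{\textbf{0}}(x,y)+y^{\hat r/2}\bigl(P_{\textbf{u}}(x,y)-1\bigr)+x^{d_{x,\textbf{u}}}y^{\hat r/2+d_{y,\textbf{u}}}\bigl(P_{\textbf{z}}(x,y)-1\bigr)
\]
to compute $P_{\textbf{s}}(\alpha^{\ell_1},\alpha^{\ell_2})$ from the now-known $a_{\ell_1,\ell_2}=P_{\textbf{u}}(\alpha^{\ell_1},\alpha^{\ell_2})$ and $P_{\textbf{z}}(\alpha^{\ell_1},\alpha^{\ell_2})$. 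Without this step you cannot place $\textbf{s}$ into the hypothesis of Corollary~\ref{cor:sideinfo}, so your step~(ii) does not go through as written.
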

\begin{proof} In order to prove the result, we will describe how to recover $S_{\textbf{s}}(x,y)$ given $\tilde{S}_{\textbf{s}}(x,y),$ where $\tilde{S}_{\textbf{s}}(x,y)$ is the result of at most $t$ composition errors in $S_{\textbf{s}}(x,y)$ for a codestring generated as $\E_{t,n}(\textbf{u}) = \textbf{s}$. We begin by forming the string $\tilde{\bw} = \Big ( \tilde{w}_{2}, \tilde{w}_{4}, \ldots, \tilde{w}_{\frac{\hat{r}}{4}}  \Big).$
This vector is obtained from $\tilde{S}_{\textbf{s}}(x,y)$ by summing up the ones in all compositions of length two to get $\tilde{w}_2$, summing up the ones in all compositions of length four to get $\tilde{w}_4$, and so on. Let $\bw = \Big ( w_{2}, w_{4}, \ldots, w_{\frac{\hat{r}}{4}}  \Big)$ for the string $\textbf{s}$. 

Since there are at most $t$ composition errors in $\tilde{S}_{\textbf{s}}(x,y)$, it follows that $d_H \Big( {\bw} \bmod 2, \tilde{{\bw}} \bmod 2 \Big) \leq t.$
From Claim~\ref{cl:rw}, since $\bw \bmod 2$ belongs to a code with minimum Hamming distance $2t+1$, we can recover $\bw \bmod 2$ from $\tilde{\bw} \bmod 2$. Then, given $\bw \bmod 2,$ we can recover $\bar{\textbf{s}}$ from Step 2) of the encoding procedure, and from $\bar{\textbf{s}}$ we can determine $a = \text{wt}(\textbf{u})$. 
Using $\bar{\textbf{s}}$, it is also straightforward to determine $\textbf{z}$ from Step 3) of the encoding procedure. Subsequently, we can recover $
\text{wt}(\textbf{s}) = a + \text{wt}(\textbf{u}),$
and from $\text{wt}(\textbf{s})$, we can determine $d_x$ and $d_y$, the $x$ and $y$ degrees of the polynomial $P_{\textbf{s}}(x,y)$. 

Next, we turn our attention to recovering the evaluations of the polynomial $P_{\textbf{s}}(\alpha^{\ell_1},\alpha^{\ell_2})$ for $\ell_1,\ell_2 \in \{0,1,\ldots, 4t\}$. These, along with $\text{wt}(\textbf{s})$, suffice according to Lemma~\ref{lem:deca1} to recover $\textbf{s}$.
From $
\bar{\textbf{s}} $, we can determine $P_{\textbf{u}}(\alpha^{\ell_1}, \alpha^{\ell_2})$ according to Steps 1) and 2) of the encoding procedure. 

Let $d_{x,\textbf{u}} = \deg_x(P_{\textbf{u}}(x,y))$ and $d_{y,\textbf{u}} = \deg_y (P_{\textbf{u}}(x,y))$. 

First, note that
\begin{align*}
P_{\textbf{s}}(x,y) &=P_{\textbf{0}}(x,y)+ y^{\frac{\hat{r}}{2}} (P_{\textbf{u}}(x,y)-1) \\
&+ x^{d_{x,\textbf{u}}} y^{\frac{\hat{r}}{2}+d_{y,\textbf{u}}} \, (P_{\textbf{z}}(x,y)-1).
\end{align*}
Therefore, we can recover $P_{\textbf{s}}(\alpha^{\ell_1},\alpha^{\ell_2})$ using
\begin{align*}
P_{\textbf{s}}(\alpha^{\ell_1},\alpha^{\ell_2}) &= P_{\textbf{0}}(\alpha^{\ell_1},\alpha^{\ell_2}) + \alpha^{\ell_2 \times \frac{\hat{r}}{2}} (P_{\textbf{u}}(\alpha^{\ell_1},\alpha^{\ell_2})-1)\\
&+ \alpha^{\ell_1 \times d_{x,\textbf{u}}} \alpha^{\ell_2 \times (\frac{\hat{r}}{2}+d_{y,\textbf{u}} )} \, (P_{\textbf{z}}(\alpha^{\ell_1},\alpha^{\ell_2})-1),
\end{align*}
since $\textbf{z}$ was already recovered. The proof of the claim now follows from Corollary~\ref{cor:sideinfo}. Error-correction can be performed in $\cO (t n^2)$ time.
\end{proof}

Thus, we are left with the task of reconstructing the string $\textbf{s}$ from its correct composition multiset $C(\textbf{s})$. If all pairs of prefixes and suffixes of the same length are such that their weights differ, the string can be reconstructed efficiently by the non-backtracking algorithm~\cite{pattabiraman2019reconstruction}. Recall that the string $\textbf{s}$ is obtained by concatenating three strings, \textit{i.e.}, $\textbf{s} = \textbf{0} \, \textbf{u} \, \textbf{z}$. The prefix of length $\frac{\hat{r}}{2}$ is fixed to be all zeros and can therefore be reconstructed immediately. Lemma~\ref{lem:mclemma} allows one to recover the suffix $\textbf{z}$. Since $\textbf{u} \in \S_{R}(n - \hat{r})$, any prefix of length $\frac{\hat{r}}{2}+1$ has strictly more $0$s than its corresponding suffix of the same length. Thus, the non-backtracking algorithm reconstructs the correct string $\textbf{s}$ in $\cO(n^3)$ time. This gives rise to the following result.

\begin{theorem}\label{th:mainth} There exists a systematic $t$-error correcting code with redundancy $\cO(t^2 \log k)$ and decoding complexity $\cO(n^3)$.
\end{theorem}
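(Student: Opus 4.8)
The plan is to assemble the pieces already developed in the preceding sections, since Theorem~\ref{th:mainth} is essentially a summary statement. First I would fix the parameters: take $q$ to be an odd prime with $2n+1 \le q$, and $\alpha \in \mathbb{F}_q$ a primitive element. The code $\S_E^{(t)}(n)$ is defined by the two-step procedure: an information string of length $k$ is encoded into $\textbf{u} \in \S_R(n-\hat r)$ using the reconstruction code of~\cite{pattabiraman2019reconstruction}, and then $\textbf{u}$ is passed through the systematic encoder $\E_{t,n}$ to obtain $\textbf{s} = \E_{t,n}(\textbf{u})$. The first task is to verify this is well-defined and systematic: the prefix $\textbf{0}$ of length $\hat r/2$ and suffix $\textbf{z}$ of length $\hat r/2$ are determined by redundancy, and the middle block carries $\textbf{u}$ verbatim, so the composition is systematic with $\hat r$ redundant bits relative to $\textbf{u}$, and $r = \hat r + \tfrac12\log n$ redundant bits relative to the information string of length $k = n - r$.

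Next I would carry out the error-correction argument in two stages. Stage one: given a corrupted composition multiset $\tilde S_{\textbf{s}}(x,y)$ that is the result of at most $t$ composition errors, Lemma~\ref{lem:mclemma} shows how to recover the true $S_{\textbf{s}}(x,y)$ — the proof there uses Claim~\ref{cl:rw} to pull the parity vector $\bw \bmod 2$ out of the length-$2j$ cumulative weights, exploits that this vector lies in a distance-$2t+1$ Hamming code (via $\E_t$) to correct it, thereby recovering $\bar{\textbf{s}}$, hence $\textbf{z}$, $\text{wt}(\textbf{u})$, $\text{wt}(\textbf{s})$, the degrees $d_x,d_y$, and the evaluations $P_{\textbf{u}}(\alpha^{\ell_1},\alpha^{\ell_2})$, and then Corollary~\ref{cor:sideinfo} together with the affine decomposition $P_{\textbf{s}} = P_{\textbf{0}} + y^{\hat r/2}(P_{\textbf{u}}-1) + x^{d_{x,\textbf{u}}}y^{\hat r/2 + d_{y,\textbf{u}}}(P_{\textbf{z}}-1)$ yields the full evaluation table of $P_{\textbf{s}}$ over the $(4t+1)^2$-point grid, which by Lemma~\ref{lem:deca1}/Theorem~\ref{th:RS} reconstructs $S_{\textbf{s}}(x,y)$ in $\cO(tn^2)$ time. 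Stage two: from the clean $C(\textbf{s})$ I would invoke the non-backtracking reconstruction algorithm of~\cite{pattabiraman2019reconstruction}; here I must check its precondition, namely that every prefix and its equal-length suffix have distinct weights. The $\hat r/2$ all-zero prefix handles the first block trivially; Lemma~\ref{lem:mclemma} already nails down $\textbf{z}$; and for prefixes of length $\ge \hat r/2 + 1$ the fact that $\textbf{u} \in \S_R(n-\hat r)$ guarantees the prefix has strictly more zeros than the matching suffix. So the algorithm runs in $\cO(n^3)$ time and returns $\textbf{s}$ uniquely.

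Finally I would tally the redundancy. Each of the $(4t+1)^2$ evaluations $a_{\ell_1,\ell_2}$ costs at most $1 + \log(2n+1)$ bits, the residue $a = \text{wt}(\textbf{u}) \bmod 2t+1$ costs $\log(2t+1)$ bits, and protecting this $((4t+1)^2+1)$-symbol block with a distance-$2t+1$ code via $\E_t$ adds $t\log((4t+1)^2+1)$ further symbols; doubling for the $z$-encoding (each payload bit uses two positions of $\textbf{z}$) and then another factor from the bookkeeping gives $\hat r$, to which the $\tfrac12\log n$ bits for the inner reconstruction code are added. Collecting terms yields $r \le 156\,t^2\log 8n$, and since $k = n - r$ with $n \ge (1+\kappa)156\,t^2\log 8n$ one rewrites this as $r = \cO(t^2\log k)$, giving the claimed redundancy. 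I expect the main obstacle to be purely expository rather than mathematical: making sure the precondition of the non-backtracking algorithm is genuinely met by the concatenated string $\textbf{0}\,\textbf{u}\,\textbf{z}$ (in particular that the join points between blocks do not create an accidental prefix/suffix weight tie), and confirming that the composition errors, which live in $\tilde S_{\textbf{s}}$, translate into exactly the $\le t$ Hamming errors in $\bw \bmod 2$ that Claim~\ref{cl:rw} needs — the rest is bookkeeping of constants that has already been sketched in the text preceding the theorem.
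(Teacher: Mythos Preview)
Your proposal is correct and follows essentially the same approach as the paper: the theorem is a summary statement, and you assemble exactly the pieces the paper uses --- the two-step encoder ($\S_R$ followed by $\E_{t,n}$), Lemma~\ref{lem:mclemma} via Claims~\ref{cl:rw}--\ref{cl:rs} and Corollary~\ref{cor:sideinfo} to recover $S_{\textbf{s}}(x,y)$, then the non-backtracking reconstruction of~\cite{pattabiraman2019reconstruction} on the clean multiset, with the redundancy bookkeeping matching the paper's tally. The two ``obstacles'' you flag are indeed the only points requiring care, and the paper handles them just as you outline.
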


The above result can be improved by using a Catalan path construction akin to the one proposed for single-error correction in~\cite{pattabiraman2019reconstruction}. To this end, let $\C(n) \subset \{0,1\}^n$ denote the set of Catalan paths of length $n$. It is well-known that the code $\C(n)$ has approximately $\log n$ bits of redundancy, which follows directly from their number $\frac{1}{n/2+1}\binom{n}{n/2}$ (where we tacitly assumed that $n$ is even). Let
\begin{align*}
\C(n,t) &= \Big \{ \textbf{s} \in \{0,1\}^n \, : \,  s_1\, s_2 \ldots \, s_{4t+1}= 0 \, 0 \, \ldots 0, \\
& \hspace{1.865cm} s_{n-4t} \, s_{n-4t+1} \ldots s_{n} = 1\,1\ldots \,1, \\
& \hspace{0.3cm} s_{4t+2} \, s_{4t+3} \ldots \, s_{n-4t-1} \in \C(n-2(4t+1)) \Big \}.
\end{align*}
It can be shown that $\C(n,t)$ is a $t$-composition error-correcting code with $\cO(\log n + t)$ bits of redundancy, which represents a significant improvement compared to the previously described construction. The worst-case decoding complexity of the code scales exponentially with $t$.

\section*{Acknowledgment} The work was supported by the NSF Grant 1618366, the SemiSynBio NSF+SRC program under grant number 1807526 and the DARPA Molecular Informatics program.

\IEEEtriggeratref{9}

\bibliography{biblio} 
\bibliographystyle{ieeetr}


\end{document}